\documentclass[letterpaper, 10 pt, conference]{ieeeconf}
\usepackage[utf8]{inputenc}

\IEEEoverridecommandlockouts

\usepackage{balance}

\usepackage{multirow}

\usepackage{amsmath,mathtools}  %
\usepackage{amsfonts,amssymb}
\usepackage{bbm}

\usepackage{xspace}    %

\usepackage{graphicx}
\graphicspath{{figs/}}
\usepackage{makecell}
\usepackage{subcaption}
\usepackage{caption}

\usepackage{hyperref}
\hypersetup{colorlinks=true, linkcolor= blue, allcolors=blue, citecolor = blue, filecolor=black, urlcolor=blue}

\usepackage{todonotes}
\usepackage{comment}

\usepackage{tikz}
\usetikzlibrary{arrows,positioning}   %

\usepackage{pgfplots}\pgfplotsset{compat=1.9}
\usepackage{grffile}
\pgfplotsset{compat=newest}  %
\usetikzlibrary{plotmarks}
\usetikzlibrary{arrows.meta}
\usepgfplotslibrary{patchplots}
\usepgfplotslibrary{groupplots}

\pgfplotsset{compat=1.15}
\usepackage{mathrsfs}
\usetikzlibrary{arrows}

\usepackage{tabularx}

\usepackage[ruled]{algorithm}
\usepackage[noend]{algpseudocode}

\usepackage{diagbox}
\newcolumntype{K}[1]{>{\centering\arraybackslash}p{#1}}

\makeatletter
\newcommand{\multiline}[1]{%
  \begin{tabularx}{\dimexpr\linewidth-\ALG@thistlm}[t]{@{}X@{}}
    #1
  \end{tabularx}
}
\makeatother

\usepackage[per-mode=symbol]{siunitx}

\usepackage{cite}  %
\usepackage{array}  %
\usepackage{booktabs}           %
\usepackage{balance}

\newtheorem{theorem}{Theorem}

\newtheorem{lemma}[theorem]{Lemma}

\newtheorem{assumption}{Assumption}

\newcounter{remark}
{\par\endtrivlist\unskip}

\newcounter{problem}
{\par\endtrivlist\unskip}

\usepackage{mylatexdefs}

\newcommand{\bbsym}[1]{\ensuremath{\boldsymbol{#1}}}

\usepackage{stfloats}%

\begin{document}

\title{\LARGE \bf Toward Single-Step MPPI via Differentiable Predictive Control}
\author{Viet-Anh Le$^{1}$, Renukanandan Tumu$^{1}$, and Rahul Mangharam$^{1}$
\thanks{$^{1}$Department of Electrical \& Systems Engineering, University of Pennsylvania, Philadelphia, PA 19104, USA (emails: {\tt\small \{vietanh,nandant,rahulm\}@seas.upenn.edu}).}
\thanks{This work was partially supported by US DoT Safety21 National University Transportation Center and NSF under Grant CISE-$2431569$.}
}
\maketitle

\begin{abstract}
Model predictive path integral (MPPI) is a sampling-based method for solving complex model predictive control (MPC) problems, but its real-time implementation faces two key challenges: the computational cost and sample requirements grow with the prediction horizon, and manually tuning the sampling covariance requires balancing exploration and noise. 
To address these issues, we propose \emph{Step-MPPI}, a framework that learns a sampling distribution for efficient single-step lookahead MPPI implementation. 
Specifically, we use a neural network to parameterize the MPPI proposal distribution at each time step, and train it in a self-supervised manner over a long horizon using the MPC cost, constraint penalties, and a maximum-entropy regularization term. 
By embedding long-horizon objectives into training the neural distribution policy, Step-MPPI achieves the foresight of a multi-step optimizer with the millisecond-level latency of single-step lookahead. We demonstrate the efficiency of Step-MPPI across multiple challenging tasks in which MPPI suffers from high dimensionality and/or long control horizons.
\end{abstract}

\section{Introduction}

Model predictive path integral (MPPI) control \cite{williams2017model} is a sampling-based approach for solving nonlinear and nonconvex model predictive control (MPC) problems for both stochastic and deterministic systems \cite{homburger2025optimality}.
By sampling trajectories from a control distribution and updating the control sequence using a cost-weighted strategy, MPPI avoids relying on problem-specific, domain-knowledge-based optimizers, making it well suited for nonlinear dynamical systems and complex control tasks.
MPPI implementation can also be efficiently parallelized, making it attractive for real-time control. 
As a result, MPPI has been widely applied in practice, especially in robotics and autonomous driving (see \cite{kazim2024recent} for a survey).
Despite these advantages, MPPI still has several important limitations; it can require many samples in long-horizon tasks, and is sensitive to the choice of sampling distribution.
Achieving strong control performance in long-horizon tasks requires many samples, due to the stochastic nature of MPPI sampling. These large sample counts can make MPPI computationally expensive on embedded hardware. Second, the choice of sampling distribution, which is often manually tuned, can have large impacts on the performance and robustness of the resulting control. 
These limitations pose major challenges for real-time deployment.

Recent studies have focused on improving the proposal distribution in MPPI to achieve better performance with fewer rollouts, which is important for real-time implementation.
Asmar \etal \cite{asmar2023model} proposed adaptive importance sampling to refine the proposal mean and covariance from an initial batch of samples.
Sacks \etal used imitation learning to learn MPPI update rules for fast, low-sample control \cite{sacks_learning_2022}, and later proposed jointly learning the optimizer and warm-starting procedure with model-free reinforcement learning \cite{sacks2024deep}.
Kusumoto \etal \cite{kusumoto2019informed} learned sampling distributions using conditional variational autoencoders trained on datasets of optimal solutions.
Belvedere \etal \cite{belvedere2025feedback} augmented MPPI with local linear feedback gains derived from sensitivity analysis.
Qu \etal \cite{qu2023rl} used an offline reinforcement learning policy to initialize MPPI rollouts.
Trevisan \etal \cite{trevisan2024biased} proposed Biased-MPPI that uses classical or learning-based ancillary controllers to guide sampling.
Crestaz \etal \cite{crestaz2025td} trained a value function via temporal-difference learning to approximate long-term return, enabling shorter-horizon MPPI to retain strong performance.

\begin{figure}
\centering
\includegraphics[width=1.01\linewidth, trim={100 150 70 120}, clip]{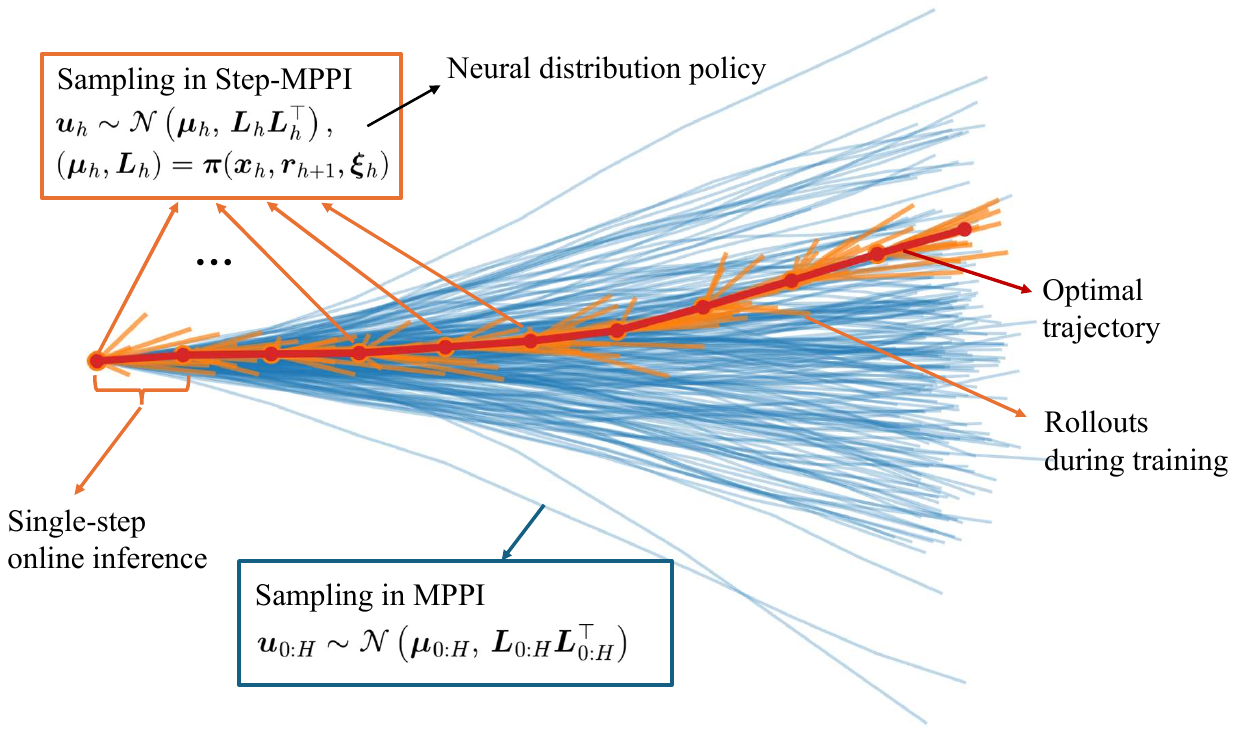}
\caption{Overview of the proposed Step-MPPI versus conventional MPPI. In Step-MPPI, samples are drawn from a distribution parameterized by the neural network, whereas MPPI samples from a nominal distribution over the control horizon. 
During training, Step-MPPI performs rollouts over the full control horizon, while during inference, only a single-step rollout at the current state is needed.
}
\label{fig:overview}
\vspace{-8mm}
\end{figure}

We propose \emph{Step-MPPI}, a framework that learns a \emph{neural distribution policy} to identify the sampling distribution at each time step. 
The distribution is then used to generate control inputs in a single-step lookahead MPPI, as opposed to learning the distribution over the entire horizon, as done in prior work. 
Moreover, the neural distribution policy in Step-MPPI provides a ``feedback policy'' that guides the sampling distribution toward the optimal solution learned during training.
That fundamental discrepancy between Step-MPPI and conventional MPPI can be illustrated in Fig.~\ref{fig:overview}.
The learning procedure is inspired by differentiable predictive control (DPC) \cite{drgovna2024learning}, in which the neural distribution policy is trained in a self-supervised manner using a loss function defined by the MPC cost, constraint penalties, and an exploratory regularization term.
Furthermore, to enable end-to-end policy optimization, we formulate the sampling-based MPPI update as a differentiable operator, and derive the closed-form Jacobian for the MPPI update layer, bridging the gap between differentiable programming and derivative-free sampling methods.

The proposed framework offers several advantages over existing methods in the literature. 
First, most prior MPPI methods draw rollout samples over the entire control horizon, but generating useful samples to capture complex behaviors is difficult without a distributional prior. 
For example, in autonomous vehicle navigation, if the desired behavior involves sharp acceleration and turns, a large number of samples may be required to cover low-cost trajectories. 
By training the distribution policy offline and using it to generate samples, our framework guides online samples toward low-cost regions.
In addition, the framework learns both the sampling mean and covariance, enabling a balance between control performance and exploration for greater robustness. 
Second, online execution of Step-MPPI is reduced to a neural network prediction followed by a single-step MPPI update, resulting in lower computational cost and sample requirements than conventional MPPI approaches.
Finally, unlike deterministic neural policies that often struggle under out-of-distribution conditions, Step-MPPI combines learned proposals with online sampling-based refinement to preserve adaptability in real time. 
As a result, compared with DPC, the proposed framework provides a robust correction layer that helps mitigate suboptimality and constraint violations under out-of-distribution conditions.

The remainder of this paper is organized as follows. 
Section~\ref{sec:problem} presents the formulation of the problem and the necessary background, and Section~\ref{sec:approach} details the proposed framework. 
Section~\ref{sec:sim} presents the simulation results, and Section~\ref{sec:con} concludes the paper with final remarks.

\section{Problem Formulation and Preliminaries}
\label{sec:problem}

In this section, we present the problem formulation and provide the necessary background on sampling-based MPC via MPPI and the self-supervised DPC framework.

\subsection{MPC Problem Formulation}

We consider the control problem for a discrete-time dynamical system with states $\bbsym{x}_t \in \RR^{n_x}$ and control inputs $\bbsym{u}_t \in \RR^{n_u}$. 
The system dynamics are governed by:
\begin{equation}
\bbsym{x}_{t+1} = \bbsym{f} (\bbsym{x}_t, \bbsym{u}_t; \bbsym{\xi}_t),
\label{eq:dynamics}
\end{equation}
where $\bbsym{f}$ is a dynamic model with possibly time-varying model parameters collected in $\bbsym{\xi}_t$. 
In practice, the discrete-time model \eqref{eq:dynamics} is obtained from the continuous-time ODEs using a differentiable discretization method with sampling time $\delta t$.
In MPC, over a control horizon of length $H$, we seek a control sequence $\bbsym{u}_{t:t+H-1}^\top = [\bbsym{u}_t^\top, \bbsym{u}_{t+1}^\top, \cdots, \bbsym{u}_{t+H-1}^\top]$, which results in a state trajectory $\bbsym{x}_{t:t+H}^\top = [\bbsym{x}_t^\top, \bbsym{x}_{t+1}^\top, \cdots, \bbsym{x}_{t+H}^\top]$, that minimizes the following cost function:
\begin{equation}
\begin{multlined}
C(\bbsym{x}_{t:t+H}, \bbsym{u}_{t:t+H-1}; \bbsym{r}_{t:t+H}) = \\ \quad \sum_{h=0}^{H-1} c(\bbsym{x}_{t+h+1}, \bbsym{u}_{t+h}; \bbsym{r}_{t+h+1}),    
\end{multlined}
\label{eq:total_cost}
\end{equation}
where $c (\bbsym{x}_t, \bbsym{u}_t; \bbsym{r}_{t})$ is the stage cost, while $\bbsym{r}_t$ denotes the vector of exogenous information available to the controller, such as future reference signals, time-varying cost parameters, and, when applicable, coefficients associated with state and input constraints.
At each time step $t$, we solve the following optimization problem:
\begin{equation}
\begin{aligned}
\minimize_{\bbsym{u}_{t:t+H-1}} \quad 
& C(\bbsym{x}_{t:t+H}, \bbsym{u}_{t:t+H-1}; \bbsym{r}_{t:t+H}) \\
\subjectto \quad 
& \bbsym{x}_{t+h+1} = \bbsym{f}\!\left(\bbsym{x}_{t+h}, \bbsym{u}_{t+h}; \bbsym{\xi}_{t+h}\right), \\
& h = 0,\dots,H-1.
\end{aligned}
\label{eq:mpc_problem}
\end{equation}
In a receding horizon manner, we apply only the first control input $\bbsym{u}_t^\star$ from the optimizer, move the horizon forward, and resolve \eqref{eq:mpc_problem} at the next time step.

Note that the MPC formulation \eqref{eq:mpc_problem} can accommodate both state and input constraints, which are treated as soft constraints through penalty terms in the cost function. 
Soft constraints are commonly used in both sampling-based MPC and DPC. 
In addition, the formulation in \eqref{eq:total_cost} assigns the same stage cost to the terminal time step as to the preceding time steps. 
Nevertheless, the framework can be readily extended to include an additional terminal cost function, as well as terminal soft constraints.

\subsection{MPPI}

In MPPI \cite{williams2017model}, the optimal control sequence is computed based on multiple samples drawn from a Gaussian distribution. 
Let $\bbsym{z}_h$ denote the parameters of the sampling distribution at time step $h$, which consists of the mean and the Cholesky factor of the covariance matrix. 
Over the control horizon, define
$\bbsym{z}_{t:t+H-1}^\top = [\bbsym{z}_t^\top, \bbsym{z}_{t+1}^\top, \cdots, \bbsym{z}_{t+H-1}^\top]$.
Henceforth, for notation simplicity, we use $\bbsym{x}$, $\bbsym{u}$, $\bbsym{r}$, and $\bbsym{z}$ to denote the corresponding horizon-wise variables. 
Let $\bbsym{\eta}_{\bbsym{z}}$ denote the sampling distribution of the control inputs, \ie $\bbsym{u} \sim \bbsym{\eta}_{\bbsym{z}}$.
Thus, at every time step $t$, the goal of MPPI is to find the over-horizon optimal parameters $\bbsym{z}_{t:t+H-1}$ that solve:
\begin{equation}
\bbsym{z}_{t:t+H-1} = \arg\min_{\bbsym{z} \in \ZZZ} \;  J_t (\bbsym{z}),
\label{eq:mpc_opt}
\end{equation}
where $J_t (\cdot)$ is a statistic defined given the cost function $C(\bbsym{x}, \bbsym{u}; \bbsym{r})$ at time step $t$ which measures the accumulated cost of a trajectory.
In MPPI, we choose the exponential utility for the statistic:
\begin{equation}
J_t (\bbsym{z}) := -\log\, \mathbb{E}_{\pi_{\bbsym{z}}}
\left[
\exp \left(-\frac{1}{\lambda}\, C(\bbsym{x}, \bbsym{u}; \bbsym{r}) \right)
\right],
\end{equation}
where $\lambda>0$ is a scaling parameter, also known as temperature. 
We compute the gradients using a likelihood-ratio derivative as follows:
\begin{equation}
\nabla J_t (\bbsym{z}) 
=
\frac{
\mathbb{E}_{\bbsym{\eta}_{\bbsym{z}}}
\Big[
w(\bbsym{u})\,
\nabla_{\bbsym{z}}\log \bbsym{\eta}_{\bbsym{z}}(\bbsym{u})
\Big]
}{
\mathbb{E}_{\bbsym{\eta}_{\bbsym{z}}}
\Big[
w(\bbsym{u})
\Big]
},
\label{eq:mppi_grad_lr}
\end{equation}
where 
\begin{equation}
w(\bbsym{u})
=
\exp\Big(
-\frac{1}{\lambda}
C(\bbsym{x}, \bbsym{u}; \bbsym{r})
\Big).
\end{equation}
We approximate these expectations with Monte Carlo sampling, which results in a
convex combination of gradients:
\begin{equation}
\nabla J_t (\bbsym{z})
= -\sum_{i=1}^{N} w_i\, \nabla_{\bbsym{z}} \log \bbsym{\eta}_{\bbsym{z}} \big(\bbsym{u}^{(i)}\big),
\label{eq:mppi_grad_mc}
\end{equation}
with weights $w_i$ defined by the \texttt{softmax} operation
\begin{equation}
\label{eq:mppi_weights_softmax}
w_i
=
\frac{
\exp \left(-\frac{1}{\lambda} C(\bbsym{x}^{(i)}, \bbsym{u}^{(i)}; \bbsym{r}) \right)
}{
\sum_{j=1}^{N}
\exp \left(-\frac{1}{\lambda} C(\bbsym{x}^{(j)}, \bbsym{u}^{(j)}; \bbsym{r})\right)
}.
\end{equation}
Next, we choose the Bregman divergence to be the KL divergence and the
distribution to be a factorized Gaussian:
\begin{equation}
\bbsym{\eta}_{\bbsym{z}} (\bbsym{u})
= \prod_{h=0}^{H-1} \NNN (\bbsym{u}_{t+h}; \bbsym{\mu}_{t+h}, \bbsym{\Sigma}_{t+h}),
\label{eq:factorized_gaussian}
\end{equation}
for mean vectors $\bbsym{\mu}_{t+h}$ and covariance matrices $\bbsym{\Sigma}_{t+h}$. 
Under these assumptions, we obtain the following update rule for the distribution's mean:
\begin{equation}
\bbsym{\mu}_{t+h}
=
\sum_{i=1}^{N} w_i\, \bbsym{u}_{t+h}^{(i)},
\end{equation}
while the covariance matrix is updated according to
\begin{equation}
\bbsym{\Sigma}_{t+h}
=
\sum_{i=1}^{N} w_i\,\bbsym{m}_{t+h}^{(i)}\big(\bbsym{m}_{t+h}^{(i)}\big)^{\top},
\end{equation}
or be kept fixed over time.

In summary, MPPI operates in three steps. 
First, MPPI samples input sequences over a planning horizon from a predefined sampling distribution. 
Second, it rolls out the system dynamics for each sample and evaluates the corresponding trajectory costs. 
Third, it updates the control input sequence using the importance-sampling weighting rule. 
This procedure can be repeated for multiple iterations.

\subsection{Differentiable Predictive Control}

DPC \cite{drgovna2024learning}, contrary to MPPI, is a deterministic framework that uses differentiable programming to train a neural network to approximate the explicit predictive control policy in a self-supervised manner. 
The MPC problem is viewed as a parametric optimal control problem (pOCP), where the parameters may include the current state, reference signals, model parameters, and cost parameters. 
DPC then learns a neural control policy that maps the current state and parameters to the optimal control input at each time step.
In DPC, we make the following differentiability assumption.

\begin{assumption}[Differentiability]
\label{assp:differentiability}
The system dynamics, cost functions, and constraint functions are continuously differentiable with respect to their arguments. 
As a result, the closed-loop state and control trajectories, as well as the corresponding training objective, are differentiable with respect to the control policy parameters.
\end{assumption}

Let $\bbsym{\pi}_{\bbsym{\theta}}$ denote a differentiable control policy parameterized by $\bbsym{\theta}$. 
At time step $h$, the control input is computed by:
\begin{equation}
\bbsym{u}_{h} = \bbsym{\pi}_{\bbsym{\theta}} \left(\bbsym{x}_{h}, \bbsym{r}_{h+1}, \bbsym{\xi}_{h} \right),
\label{eq:dpc_policy}
\end{equation} 
where $\bbsym{r}_{h+1}$ and $\bbsym{\xi}_{h}$ may include references and model parameters, allowing the policy to generalize to tasks and operating conditions. 
Given an initial state $\bbsym{x}_0$ and a sequence of problem parameters, the policy is rolled out through the system dynamics as:
\begin{equation}
\bbsym{x}_{h+1} = \bbsym{f}\!\left(\bbsym{x}_{h}, \bbsym{\pi}_{\bbsym{\theta}}(\bbsym{x}_{h}, \bbsym{z}_{h}); \bbsym{\xi}_{h}\right), 
\label{eq:dpc_closed_loop}
\end{equation}
for $h = 0,\dots,H-1$. This yields a differentiable computational graph for the closed-loop trajectory. 
The resulting rollout is structurally similar to a single-shooting MPC formulation, except that the optimization variables are the policy parameters $\bbsym{\theta}$ rather than the open-loop control sequence \cite{drgovna2024learning}. 

To train the neural policy, DPC minimizes an empirical finite-horizon loss over sampled initial conditions and task parameters $\{\bbsym{x}_0^{(i)}, \bbsym{r}^{(i)}, \bbsym{\xi}^{(i)} \}_{i=1:M}$:
\begin{subequations}
\label{eq:dpc_objective}
\begin{align}
\minimize_{\bbsym{\theta}} 
\quad & \! \frac{1}{MH}\sum_{i=1}^{M}
\sum_{h=0}^{H-1}
c \left(\bbsym{x}_{h+1}, \bbsym{u}_{h}; \bbsym{r}_{h+1}\right) \\    
\subjectto \quad &
\bbsym{u}_{h} = \bbsym{\pi}_{\bbsym{\theta}} (\bbsym{x}_h, \bbsym{r}_{h+1}, \bbsym{\xi}_h), \\
\quad & \bbsym{x}_{h+1} = \bbsym{f} (\bbsym{x}_h, \bbsym{u}_h; \bbsym{\xi}_h), \\
\quad & \bbsym{x}_0 = \bbsym{x}_0^{(i)}, \bbsym{r}_{0:H} = \bbsym{r}^{(i)}, \bbsym{\xi}_{0:H} = \bbsym{\xi}^{(i)},
\end{align}    
\end{subequations}
for all $h=0,\dots,H-1$ and $i=1,\dots,M$. 
Since the stage cost, penalty terms, policy, and rollout dynamics are differentiable almost everywhere, gradients of \eqref{eq:dpc_objective} with respect to $\bbsym{\theta}$ can be computed efficiently by backpropagation through time using automatic differentiation. 
This enables offline training using stochastic gradient-based methods, while online control reduces to a single forward pass of $\bbsym{\pi}_{\bbsym{\theta}}$, which can be much faster than solving the original MPC problem repeatedly.

Overall, MPPI and DPC offer complementary strengths for real-time control. 
MPPI can handle nonlinear dynamics and complex objectives without requiring analytic gradients, but its online sampling can be computationally expensive. 
In contrast, DPC enables fast online inference through offline training, but it is sensitive to distribution shift and may yield suboptimal or constraint-violating solutions. 
Our proposed Step-MPPI, presented next, combines the strengths of both approaches by leveraging offline policy learning and online sampling-based refinement to achieve computational efficiency, strong performance, and robustness.

\section{Proposed Approach}
\label{sec:approach}

In this section, we present the details of \emph{Step-MPPI}, a framework that learns how to compute a sampling distribution to enable a single-step MPPI rollout and update while optimizing a loss function over a long control horizon. 

\subsection{Single-Step MPPI}
First, we introduce the concept of single-step MPPI, which aims to find $\bbsym{u}_h$ that minimizes the single-step immediate cost function.
In single-step MPPI, suppose that we draw the samples of control inputs at time step $h$ as follows,
\begin{equation}
\label{eq:re-trick}
\bbsym{u}_h = \bbsym{\mu}_h + \bbsym{L}_h \bbsym{\epsilon},
\;
\bbsym{\epsilon} \sim \NNN (\bbsym{0}, \bbsym{I}),    
\end{equation}
where $\bbsym{\mu}_h$ is the mean vector and $\bbsym{L}_h$ is a lower-triangular Cholesky factor of the covariance matrix, while $\bbsym{0}$ and $\bbsym{I}$ are the zero vector and identity matrix with appropriate dimensions. 
\eqref{eq:re-trick} is also known as the \emph{reparameterization trick}.
Suppose that $K$ samples are drawn, and for each sample candidate $\bbsym{u}_h^{(k)}$, $k = 1,\dots, K$, we perform single-step rollouts using the dynamics, and the following MPPI update rule:
\begin{equation}
\label{eq:mppi-update-1}
\bbsym{u}_h^{\mathrm{mppi}}
=
\sum_{k=1}^K w_k \bbsym{u}_h^{(k)}.    
\end{equation}
where the MPPI weights are computed by:
\begin{equation}
\label{eq:mppi-weights}
w_k
=
\frac{\exp\!\left(-\frac{1}{\lambda} c(\bbsym{x}_{h+1}^{(k)}, \bbsym{u}_h^{(k)}; \bbsym{r}_{h+1}^{(k)}) \right)}
{\sum_{j=1}^K \exp\!\left(-\frac{1}{\lambda}c(\bbsym{x}_{h+1}^{(j)}, \bbsym{u}_h^{(j)}; \bbsym{r}_{h+1}^{(j)}) \right)}.
\end{equation}
We represent the optimal control input computed by \eqref{eq:mppi-update-1} using the following MPPI layer:
\begin{equation}
\label{eq:dmppi_layer}
\bbsym{u}_{h}^{\mathrm{mppi}} = \mathrm{MPPI}(\bbsym{z}_{h}, \bbsym{s}^{(1:K)}),
\end{equation}
where $\bbsym{s}^{(1:K)}$ denotes the vector of $K$ cost samples.

Single-step MPPI performs poorly because it solves only a single-step optimal control problem. For this reason, the sampling distribution must be optimized through training so that single-step MPPI can achieve performance comparable to that of multi-step MPPI.

\subsection{Learning Sampling Distributions}

We denote $\bbsym{z}_{h} = (\bbsym{\mu}_{h},  \bbsym{L}_{h})$ 
as the sampling-distribution parameters.
In this work, we propose to learn the optimal parameters of the sampling distribution by a neural network as follows:
\begin{equation}
\bbsym{z}_{h} = \bbsym{\pi}_{\bbsym{\theta}} (\bbsym{x}_h, \bbsym{r}_{h+1}, \bbsym{\xi}_{h}).
\end{equation}
Naively, $\bbsym{\pi}_{\bbsym{\theta}}$ can be trained using a similar loss function to \eqref{eq:dpc_objective}.  
However, using solely the MPC formulation to define the loss function may cause the covariance of the neural distribution policy to shrink during training. 
Consequently, to preserve the exploration capability of MPPI, we propose augmenting the loss function with a maximum-entropy regularization term for the neural distribution policy.
The entropy of a Gaussian distribution $\NNN (\bbsym{\mu}_h, \bbsym{\Sigma}_h)$, denoted by $\HHH \left( \bbsym{z}_h \right)$, can be given by:
\begin{equation}
\begin{split}
\HHH (\bbsym{z}_h)
&=\frac{1}{2}\log\!\big((2\pi e)^d\, \det \bbsym{\Sigma}_h \big) \\
\end{split}
\end{equation}
Thus, given $M$ training instances $\{ \bbsym{x}_0^{(i)}, \bbsym{r}^{(i)}, \bbsym{\xi}^{(i)} \}_{i=1:M}$, the following loss function can be used to train the neural distribution policy:
\begin{subequations}\label{eq:l2o-mppi}
\begin{align}
\minimize_{\bbsym{\theta}} \quad 
& \LLL_{\mathrm{policy}} = 
\frac{1}{MH} \sum_{i=1}^{M} \sum_{h=0}^{H-1} \ell_h (\bbsym{\theta}),
\label{eq:l2o-mppi-a}
\\
\subjectto \quad 
& \bbsym{u}_{h} = \mathrm{MPPI}\big( \bbsym{z}_{h}, \bbsym{x}_h, \bbsym{s}^{(1:K)} \big),
\label{eq:l2o-mppi-b}
\\
& \bbsym{z}_{h} = \bbsym{\pi}_{\bbsym{\theta}} (\bbsym{x}_h, \bbsym{r}_{h+1}, \bbsym{\xi}_h),
\label{eq:l2o-mppi-c}
\\
& \bbsym{x}_{h+1} = \bbsym{f} (\bbsym{x}_h, \bbsym{u}_h; \bbsym{\xi}_h),
\label{eq:l2o-mppi-d}
\\
& \bbsym{x}_0 = \bbsym{x}_0^{(i)}, \bbsym{r}_{0:H} = \bbsym{r}^{(i)}, \bbsym{\xi}_{0:H} = \bbsym{\xi}^{(i)},
\label{eq:l2o-mppi-e}
\end{align}
\end{subequations}
for all $h = 0, \dots, H-1$, where the stage loss function $\ell_h (\bbsym{\theta})$ is defined by: 
\begin{equation}
\ell_h (\bbsym{\theta}) := 
c(\bbsym{x}_{h+1}, \bbsym{u}_{h}; \bbsym{r}_{h+1}) - \gamma \HHH ( \bbsym{z}_h)     
\end{equation}
with $\gamma > 0$ being a weighting factor for the regularization term.
In \eqref{eq:l2o-mppi}, $\bbsym{s}^{(1:K)} = [s^{(1)}, \dots, s^{(K)}]^\top$ is the vector of cost samples, where for each $k = 1, \dots, K$, $s^{(k)}$ is computed by: 
\[
s^{(k)} = c \big( \bbsym{f}(\bbsym{x}_{h}, \bbsym{\mu}_h + \bbsym{L}_h \bbsym{\epsilon}^{(k)}; \bbsym{\xi}_h), \bbsym{\mu}_h + \bbsym{L}_h \bbsym{\epsilon}^{(k)}; \bbsym{r}_{h+1} \big),
\]

Next, we briefly describe the NN architectures used for learning the parameters of the sampling distribution. 
The neural network $\bbsym{\pi}_{\bbsym{\theta}}$ consists of a shared multi-layer perceptron (MLP) backbone, followed by two MLP heads that predict the control mean and the entries of the lower-triangular Cholesky factor, respectively. 
When control inputs are subject to bounds, the mean output is smoothly squashed using the \texttt{tanh} function, while the covariance parameterization is processed through the \texttt{softplus} function, with additional clamping to ensure positivity and numerical stability.

\subsection{Differentiation of the MPPI Layer}

To train the neural distribution policy in an end-to-end manner, we need to differentiate through the MPPI layer \eqref{eq:dmppi_layer}. 
In what follows, we treat the MPPI weighted update as a differentiable layer using the reparameterization trick \eqref{eq:re-trick}, and derive the Jacobian of the layer's outputs with respect to the inputs.
Note that in this section, we omit time subscripts for simplicity unless they are necessary to be included explicitly.

Using the finite-sample approximation in \eqref{eq:mppi-update-1}, the Jacobian of the MPPI layer output with respect to its inputs is obtained by differentiating \eqref{eq:mppi-update-1}, which yields
\begin{equation}
\label{eq:diff_u_z}
\frac{\partial \bbsym{u}}{\partial \bbsym{z}}
=
\sum_{k=1}^K
\left(
\frac{\partial w_k}{\partial \bbsym{z}} \bbsym{u}^{(k)}
+
w_k \frac{\partial \bbsym{u}^{(k)}}{\partial \bbsym{z}}
\right).
\end{equation}
Note that each sample is differentiable with respect to the distribution parameters, and the Jacobians are given by:
\begin{equation}
\label{eq:diff_u0_z}
\frac{\partial \bbsym{u}^{(k)}}{\partial \bbsym{\mu}} = \bbsym{I},
\quad
\frac{\partial \bbsym{u}^{(k)}}{\partial \bbsym{L}} = \Big(\bbsym{\epsilon}^{(k)}\Big)^\top.
\end{equation}
where with a slight abuse of notation, we let $\partial \bbsym{u}^{(k)} / \partial \bbsym{L}$ denote the Jacobian as a linear operator acting on $d\bbsym{L}$, \ie
$d\bbsym{u}^{(k)} = (d\bbsym{L})\,\bbsym{\epsilon}^{(k)}$.
To complete the differentiation in \eqref{eq:diff_u_z} we need to compute the gradient of $w_k$ with respect to $\bbsym{z}$, which includes $\partial w_k / \partial \bbsym{\mu}$ and $\partial w_k / \partial \bbsym{L}$.
We present the gradient computation in Lemma~\ref{lem:1}.

\begin{lemma}
\label{lem:1}
Consider the importance-sampling weighting strategy \eqref{eq:mppi_weights_softmax},
the gradients of $w_k$ with respect to $\bbsym{\mu}$ and $\bbsym{L}$ are calculated as follows:
\begin{equation}
\small{
\label{eq:diff_w_mu_L}
\begin{split}
&\frac{\partial w_k}{\partial \bbsym{\mu}}
=
-\frac{1}{\lambda}
w_k
\left(
\frac{\partial c^{(k)}}{\partial \bbsym{u}^{(k)}}
-
\sum_{j=1}^K
w_j \frac{\partial c^{(j)}}{\partial \bbsym{u}^{(j)}}
\right), \\
&
\frac{\partial w_k}{\partial \bbsym{L}}
=
-\frac{1}{\lambda}
w_k
\Bigg(
\frac{\partial c^{(k)}}{\partial \bbsym{u}^{(k)}} \! \left(\bbsym{\epsilon}^{(k)}\right)^\top
\!\!-\!
\sum_{j=1}^K
w_j \frac{\partial c^{(j)}}{\partial \bbsym{u}^{(j)}}\left(\bbsym{\epsilon}^{(j)}\right)^\top \!
\Bigg),    
\end{split}}
\end{equation}
where we denote $c^{(k)} = c(\bbsym{x}_{h+1}^{(k)}, \bbsym{u}_h^{(k)}; \bbsym{r}_{h+1})$.
\end{lemma}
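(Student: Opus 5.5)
The plan is to treat $w_k$ as a softmax of the scaled costs $a_j := -\frac{1}{\lambda} c^{(j)}$ and combine the softmax Jacobian with the chain rule through the reparameterized samples \eqref{eq:re-trick}. The first step is to record the softmax derivative: since $w_k = \exp(a_k)/\sum_j \exp(a_j)$, we have $\partial w_k/\partial a_j = w_k(\delta_{kj} - w_j)$. Hence, for any parameter $\bbsym{z}$ (standing for either $\bbsym{\mu}$ or $\bbsym{L}$),
\begin{equation*}
\frac{\partial w_k}{\partial \bbsym{z}} = \sum_{j=1}^K w_k(\delta_{kj}-w_j)\frac{\partial a_j}{\partial \bbsym{z}}
= -\frac{1}{\lambda} w_k\Big(\frac{\partial c^{(k)}}{\partial \bbsym{z}} - \sum_{j=1}^K w_j \frac{\partial c^{(j)}}{\partial \bbsym{z}}\Big).
\end{equation*}
This already gives the common ``centered'' structure of both formulas in \eqref{eq:diff_w_mu_L}. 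It remains to compute $\partial c^{(j)}/\partial \bbsym{z}$.

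The second step applies the chain rule. Each $c^{(j)}$ depends on $\bbsym{z}$ only through the sample $\bbsym{u}^{(j)} = \bbsym{\mu} + \bbsym{L}\bbsym{\epsilon}^{(j)}$, because $\bbsym{x}_{h+1}^{(j)} = \bbsym{f}(\bbsym{x}_h, \bbsym{u}^{(j)}; \bbsym{\xi}_h)$, and $\bbsym{x}_h$, $\bbsym{\xi}_h$, $\bbsym{r}_{h+1}$ and $\bbsym{\epsilon}^{(j)}$ are held fixed within the layer. We therefore interpret $\partial c^{(j)}/\partial \bbsym{u}^{(j)}$ as the \emph{total} derivative of the single-step cost with respect to the sampled input, which includes the path through $\bbsym{f}$:
\begin{equation*}
\frac{\partial c^{(j)}}{\partial \bbsym{u}^{(j)}} = \nabla_{\bbsym{u}} c + \Big(\frac{\partial \bbsym{f}}{\partial \bbsym{u}}\Big)^{\!\top}\nabla_{\bbsym{x}} c .
\end{equation*}
Assumption~\ref{assp:differentiability} guarantees that this derivative exists and is continuous.

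The third step uses \eqref{eq:diff_u0_z}. For the mean, $\partial \bbsym{u}^{(j)}/\partial\bbsym{\mu} = \bbsym{I}$, so $\partial c^{(j)}/\partial\bbsym{\mu} = \partial c^{(j)}/\partial \bbsym{u}^{(j)}$, and substituting into the first display gives the first line of \eqref{eq:diff_w_mu_L}. For the Cholesky factor we work with differentials: $dc^{(j)} = \langle \partial c^{(j)}/\partial \bbsym{u}^{(j)}, (d\bbsym{L})\bbsym{\epsilon}^{(j)}\rangle = \mathrm{tr}\big((\bbsym{\epsilon}^{(j)} (\partial c^{(j)}/\partial \bbsym{u}^{(j)})^\top)\, d\bbsym{L}\big)$. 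Under the gradient (numerator-shape) convention this means $\partial c^{(j)}/\partial \bbsym{L} = (\partial c^{(j)}/\partial \bbsym{u}^{(j)})(\bbsym{\epsilon}^{(j)})^\top$, an outer product. If $\bbsym{L}$ is restricted to be lower-triangular, this gradient is simply projected onto the lower-triangular entries. Substituting gives the second line of \eqref{eq:diff_w_mu_L}.

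The computation itself is routine. The main point requiring care is bookkeeping of conventions: $\partial c/\partial\bbsym{u}$ must be read as a column gradient so that the outer product with $(\bbsym{\epsilon}^{(k)})^\top$ has the shape of $\bbsym{L}$, consistent with the operator interpretation stated after \eqref{eq:diff_u0_z}. In addition, the dependence of $\bbsym{x}_{h+1}^{(k)}$ on $\bbsym{u}^{(k)}$ must be included in that derivative rather than treating $c$ as a function of $\bbsym{u}$ alone.
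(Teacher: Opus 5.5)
Your proof is correct and matches the paper's argument: the softmax Jacobian $-\frac{1}{\lambda}w_k(\delta_{kj}-w_j)$ gives the centered form, and the chain rule through $\bbsym{u}^{(j)} = \bbsym{\mu} + \bbsym{L}\bbsym{\epsilon}^{(j)}$ (identity Jacobian for $\bbsym{\mu}$, outer product with $(\bbsym{\epsilon}^{(j)})^\top$ for $\bbsym{L}$) finishes it. Your extra remarks, reading $\partial c/\partial \bbsym{u}$ as the total derivative through $\bbsym{f}$ and projecting onto the lower-triangular entries of $\bbsym{L}$, are consistent refinements that the paper leaves implicit.
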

\vspace{1mm}
\begin{proof}
By the chain rule, we have
\begin{equation}
\frac{\partial w_k}{\partial \bbsym{z}}
= \sum_{j=1}^K \frac{\partial w_k}{\partial c^{(j)}}\, \frac{\partial c^{(j)}}{\partial \bbsym{z}}.
\end{equation}
Using the gradient of the \texttt{softmax} function,
\begin{equation}
\frac{\partial w_k}{\partial c^{(j)}}
= -\frac{1}{\lambda}\,w_k \left(\delta_{kj} - w_j \right),
\end{equation}
where $\delta_{kj}$ is the Kronecker delta, we obtain
\begin{equation}
\label{eq:diff_w_z}
\frac{\partial w_k}{\partial \bbsym{z}}
=
-\frac{1}{\lambda}
w_k
\left(
\frac{\partial c^{(k)}}{\partial \bbsym{z}}
-
\sum_{j=1}^K
w_j \frac{\partial c^{(j)}}{\partial \bbsym{z}}
\right),
\end{equation}
with 
\begin{equation}
\frac{\partial c^{(k)}}{\partial \bbsym{z}}
=
\frac{\partial c^{(k)}}{\partial \bbsym{u}^{(k)}}
\frac{\partial \bbsym{u}^{(k)}}{\partial \bbsym{z}}.
\end{equation}
Since $\frac{\partial \bbsym{u}^{(k)}}{\partial \bbsym{\mu}} = I$, $\forall k$, we have
\begin{equation}
\frac{\partial c^{(k)}}{\partial \bbsym{\mu}}
=
\frac{\partial c^{(k)}}{\partial \bbsym{u}^{(k)}}.
\end{equation}
Therefore, combining with \eqref{eq:diff_w_z}, we obtain
\begin{equation*}
\frac{\partial w_k}{\partial \bbsym{\mu}}
=
-\frac{1}{\lambda}
w_k
\left(
\frac{\partial c^{(k)}}{\partial \bbsym{u}^{(k)}}
-
\sum_{j=1}^K
w_j \frac{\partial c^{(j)}}{\partial \bbsym{u}^{(j)}}
\right).
\end{equation*}
Similarly, since $\frac{\partial \bbsym{u}^{(k)}}{\partial \bbsym{L}} = \bbsym{\epsilon}^{(k)}$, $\forall k$ we obtain
\begin{equation}
\frac{\partial c^{(k)}}{\partial \bbsym{L}}
=
\frac{\partial c^{(k)}}{\partial \bbsym{u}^{(k)}} \big(\bbsym{\epsilon}^{(k)}\big)^\top.    
\end{equation}
which, combining with \eqref{eq:diff_w_z}, yields
\begin{equation*}
\begin{multlined}
\frac{\partial w_k}{\partial \bbsym{L}}
\!=\!
-\frac{1}{\lambda}
w_k
\Bigg(
\frac{\partial c^{(k)}}{\partial \bbsym{u}^{(k)}}\big(\bbsym{\epsilon}^{(k)}\big)^\top
\!-\! 
\sum_{j=1}^K
w_j \frac{\partial c^{(j)}}{\partial \bbsym{u}^{(j)}}\left(\bbsym{\epsilon}^{(j)}\right)^\top \!
\Bigg).    
\end{multlined}
\end{equation*}

The proof is thus complete.
\end{proof}
\vspace{1mm}
Moreover, by the chain rule, we have 
\begin{equation}
\label{eq:grad_loss_theta}
\begin{multlined}
\nabla_{\bbsym{\theta}} \ell (\bbsym{\theta})
\!=\!
\underset{\bbsym{\epsilon} \sim \NNN (\bbsym{0}, \bbsym{I})}{\EE} \!\! \left[ \!
\Bigg( \frac{\partial \bbsym{u}}{\partial \bbsym{z}} \frac{\partial \bbsym{z}}{\partial \bbsym{\theta}}\right)^\top \hspace{-2mm}
\nabla_{\bbsym{u}}\, c(\bbsym{x}, \bbsym{u}; \bbsym{r}) \\
- \gamma \frac{\partial \bbsym{z}}{\partial \bbsym{\theta}}^\top \hspace{-1mm} \nabla_{\bbsym{z}} \HHH (\bbsym{z}) 
\! \Bigg] \!.
\end{multlined}
\end{equation}
The gradient of the entropy term $\HHH (\bbsym{z})$ with respect to $\bbsym{L}$ is computed by:
\begin{equation}
\label{eq:diff_entropy_L}
\nabla_{\bbsym{L}} \HHH (\bbsym{z}) = (\bbsym{L}^{-1})^\top,
\end{equation}
while the gradient of stage cost with respect to the control inputs can be derived from the chain rule as follows: 
\begin{equation}
\label{eq:diff_c_u}
\begin{multlined}
\nabla_{\bbsym{u}_h}
c(\bbsym{x}_{h+1}, \bbsym{u}_h; \bbsym{r}_{h+1})
= \frac{\partial c(\bbsym{x}_{h+1}, \bbsym{u}_h; \bbsym{r}_{h+1})}{\partial \bbsym{u}_h} \\
+ \frac{\partial c(\bbsym{x}_{h+1}, \bbsym{u}_h; \bbsym{r}_{h+1})}{\partial \bbsym{x}_{h+1}}
\frac{\partial \bbsym{f}(\bbsym{x}_{h}, \bbsym{u}_h; \bbsym{\xi}_h)}{\partial \bbsym{u}_h},
\end{multlined}
\end{equation} 
Therefore, the differentiation of the vanilla MPPI layer can be obtained by combining \eqref{eq:diff_u_z}, \eqref{eq:diff_u0_z}, \eqref{eq:diff_w_mu_L}, \eqref{eq:diff_entropy_L}, and \eqref{eq:diff_c_u}.

\section{Results and Discussions}
\label{sec:sim}

In this section, we validate the performance of the proposed framework through three numerical examples: (1) a high-speed autonomous vehicle, (2) a quadrupedal robot, and (3) an urban traffic network (see Fig.~\ref{fig:example}).
While the autonomous vehicle example demonstrates the advantages of the proposed frameworks in the long-horizon MPC problem, the quadrupedal robot example is better suited for validation in a problem involving higher-dimensional state and input spaces, as well as a system that is more sensitive to noise.
In addition, the urban traffic network problem is challenging due to two reasons: the high dimensionality and the long control horizon.
The implementation and simulation videos are available at the supplemental website: \url{https://sites.google.com/seas.upenn.edu/step-mppi}.

\begin{figure*}[t]
\centering
\begin{subfigure}[t]{0.33\textwidth}
    \centering
    \includegraphics[height=3.2cm, trim={50 200 0 240}, clip]{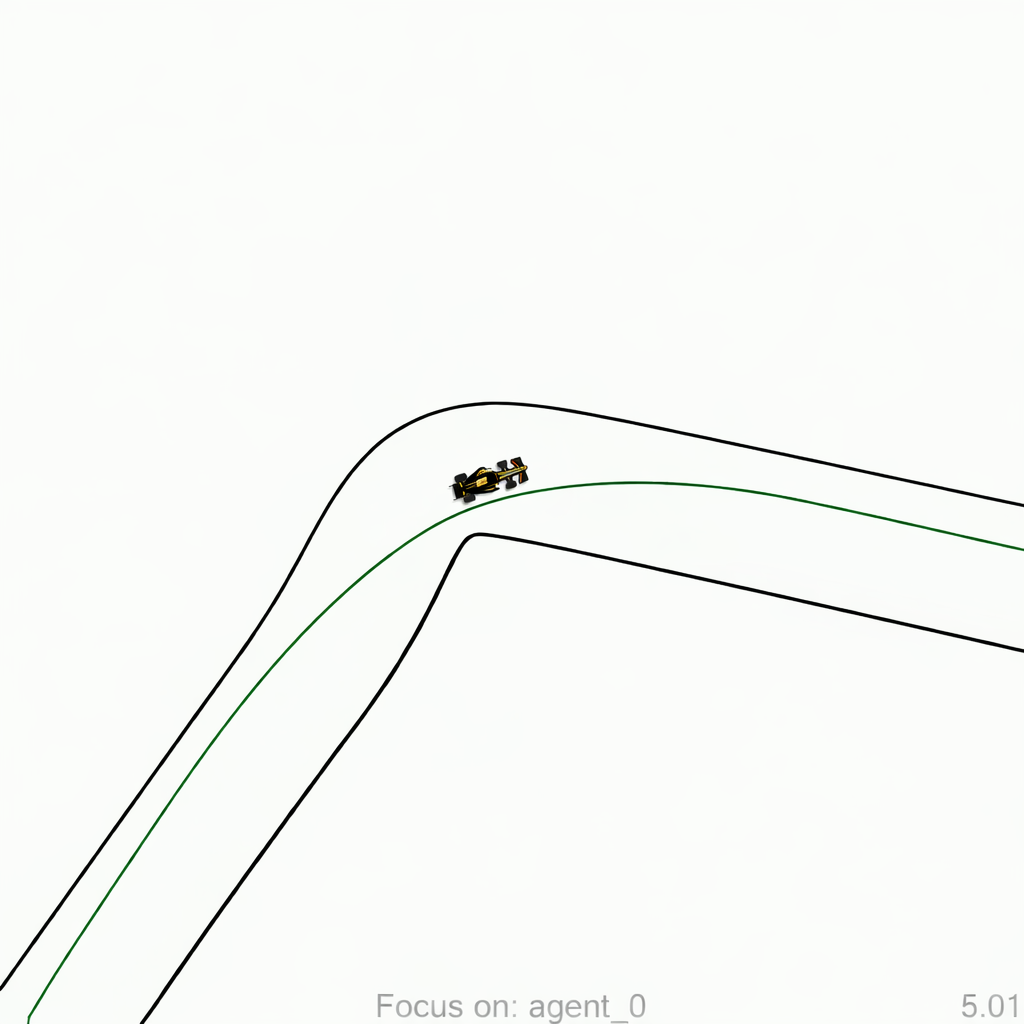}
    \caption{Autonomous vehicle in F1TENTH-Gym}
    \label{fig:ex1}
\end{subfigure}
\hfill
\begin{subfigure}[t]{0.33\textwidth}
    \centering
    \includegraphics[height=3.2cm, trim={800 0 800 0}]{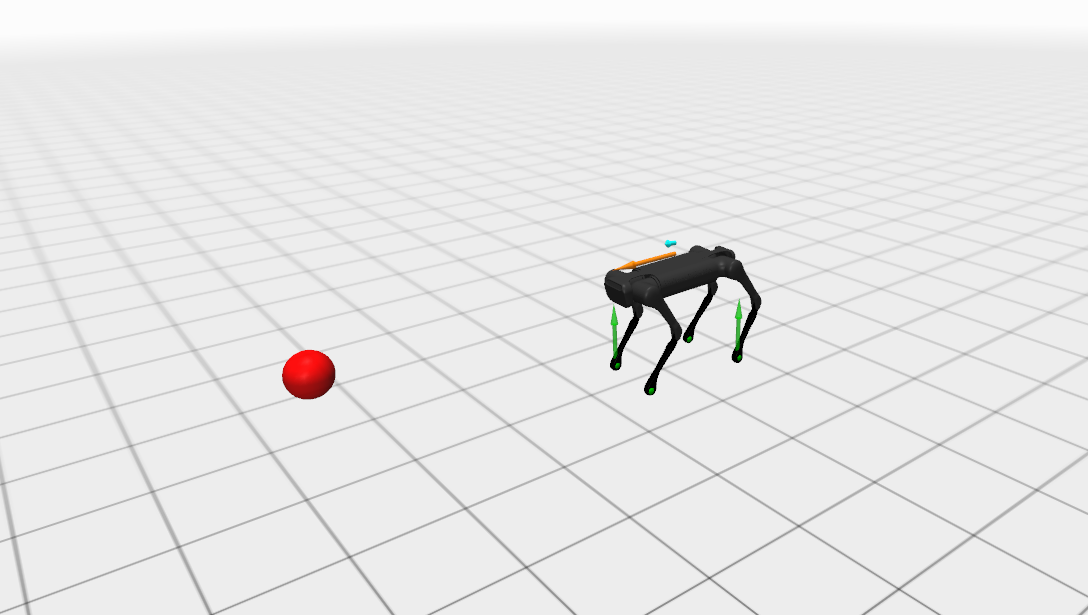}
    \caption{Quadrupedal robot in MuJoCo}
    \label{fig:ex2}
\end{subfigure}
\hfill
\begin{subfigure}[t]{0.32\textwidth}
    \centering
    \includegraphics[height=3.2cm]{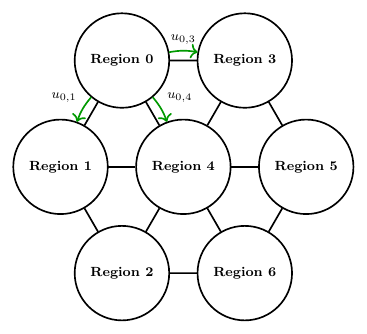}
    \caption{Traffic network with perimeter control}
    \label{fig:ex3}
\end{subfigure}
\caption{Three numerical examples considered for validation.}
\label{fig:example}
\vspace{-4mm}
\end{figure*}

\subsection{High-speed Autonomous Vehicle}

\begin{figure}[t]
\centering
\begin{subfigure}[t]{0.85\linewidth}
\centering
\includegraphics[width=\linewidth]{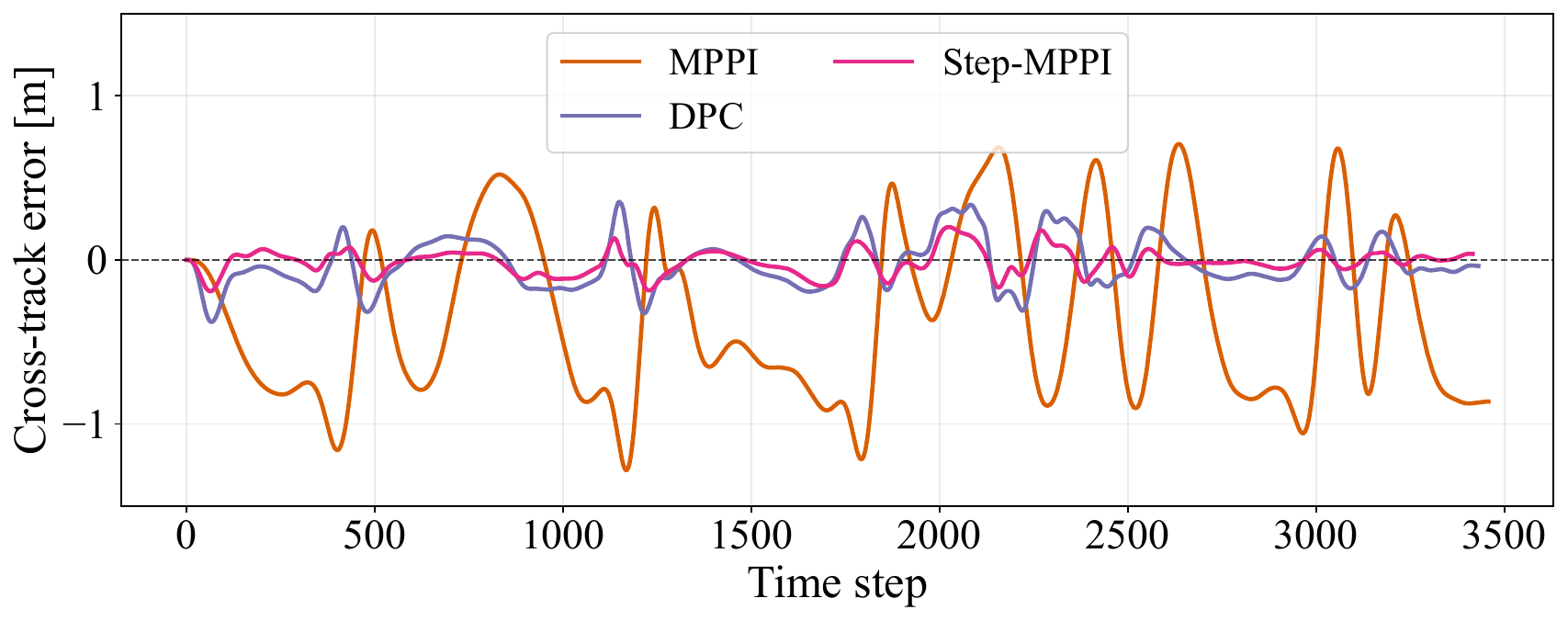}
\caption{Cross-track error over time.}
\label{fig:cte_timeseries}
\end{subfigure}
\hfill
\begin{subfigure}[t]{0.85\linewidth}
\centering
\includegraphics[width=\linewidth]{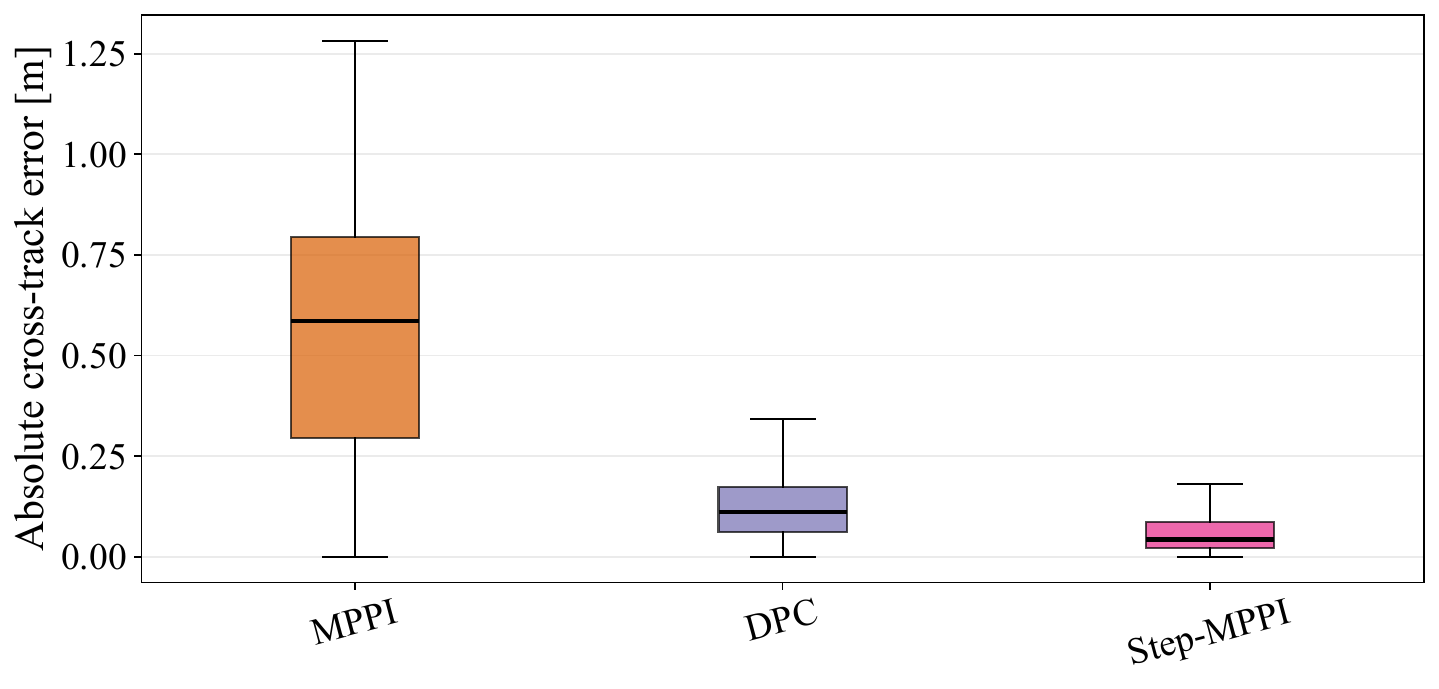}
\caption{Absolute cross-track error distribution.}
\label{fig:cte_boxplot}
\end{subfigure}
\caption{Cross-track error comparison of MPPI, DPC, and Step-MPPI in the autonomous vehicle example.}
\label{fig:cte}
\vspace{-6mm}
\end{figure}

We validate the framework using an autonomous vehicle example in the \texttt{F1TENTH-Gym} environment \cite{o2020f1tenth}.
In this example, the autonomous vehicle tracks a reference trajectory while staying within the track boundaries.
Our empirical results show that, in constrained high-speed driving, the MPC horizon must be sufficiently long for the vehicle to successfully navigate challenging regions, such as sharp turns.
We consider the dynamic single-track model \cite{althoff2017commonroad}, in which the state of the system is given by $\bbsym{x} = [p_x, p_y, v, \delta, \Psi, \dot{\Psi}, \beta]^{\top}$, where $(p_x, p_y)$ is the position in cartesian coordinates, $v$ is the velocity, $\delta$ is the steering angle of the front wheels, $\Psi$ is the heading angle, $\dot{\Psi}$ is the yaw rate, and $\beta$ is the side-slip angle. 
The control inputs $\bbsym{u} = [a, d \delta]^{\top}$ include the acceleration in the longitudinal direction and the steering speed.
The objective function at each time step is to minimize the tracking errors to reference waypoints and the control efforts as follow:
\begin{equation}
\begin{multlined}
c (\bbsym{x}_{t+h+1}, \bbsym{u}_{t+h}; \bbsym{r}_{t+h+1}) \!=\! 
\Big( \!
\left\lVert \bbsym{x}_{t+h+1} \!- \bbsym{r}_{t+h+1} \right\rVert_{\bbsym{Q}}^{2} \\
+
\left\lVert \bbsym{u}_{t+h} \right\rVert_{\bbsym{R}}^{2}
\! \Big),
\end{multlined}
\label{eq:mpc_objective}
\end{equation}
where $\bbsym{Q}$ and $\bbsym{R}$ are the weight matrices with appropriate dimensions.
Safety constraints to the track border are formulated as linear constraints at each time step as follows:
\begin{equation}
c_{\mathrm{right},\,t+h} \le a_{t+h}\, p_{x,t+h} + b_{t+h}\, p_{y,t+h} \le c_{\mathrm{left},\,t+h},
\label{eq:border_constraints}
\end{equation}
where $(x_{t+h},y_{t+h})$ denotes the vehicle position in global coordinates, while $(a_{t+h},b_{t+h},c_{\mathrm{left},\,t+h},c_{\mathrm{right},\,t+h})$ are the coefficients defining the local linear half-space approximation of the left and right track borders.
In addition to the safety constraints, we impose the following bound constraints on the states and inputs:
\begin{align}
& v_{\min} \le v_{t+h} \le v_{\max},\; \delta_{\min} \le \delta_{t+h} \le \delta_{\max} , \\
& a_{\min} \le a_{t+h} \le a_{\max},\; d \delta_{\min} \le d \delta_{t+h} \le d \delta_{\max},
\end{align}
where $v_{\min}, v_{\max}$, $\delta_{\min}, \delta_{\max}$, $a_{\min}, a_{\max}$, and $d\delta_{\min}, d\delta_{\max}$ denote the bounds on speed, steering angle, acceleration, and steering rate, respectively.

We consider the MPC problem with a horizon of length $H = 40$ and sampling time $\delta_t = \SI{0.05}{s}$.
Step-MPPI employs the importance-sampling updates with $1024$ samples. 
Note that Step-MPPI performs only a single iteration at each time step.
We compare them with two baselines: (1) the \emph{DPC} policy, and (2) \emph{MPPI} executing one iterations and $16,384$ rollout samples.
The MPPI implementation with an adaptive penalty method for better constraint handling, as presented in \cite{zang2025sit}, is adopted.
The MPPI hyperparameters have been manually tuned to ensure that the vehicle can complete the lap without crashes.

Figure~\ref{fig:cte} compares the cross-track error of MPPI, DPC, and Step-MPPI in a simulation with a vehicle speed of \SI{10}{m/s}.
In the top panel, we observe that MPPI exhibits the largest oscillations and error magnitudes over the trajectory, while DPC significantly reduces the error but still shows noticeable fluctuations in several intervals. 
In contrast, Step-MPPI maintains smaller cross-track errors and smoother trajectories.
The bottom panel further confirms this observation through the distribution of the absolute cross-track error. 
Overall, Step-MPPI achieves lower median errors with tighter distributions than DPC and MPPI. 
Overall, the figure shows that the proposed method improves tracking accuracy compared with MPPI and DPC.

\subsection{Quadrupedal Robot}

We validate the proposed framework on a quadrupedal robot walking task, whose MPC formulation is adopted from \cite{turrisi2024benefits}. 
The MPC formulation is based on single rigid body dynamics (SRBD), which capture the robot's center-of-mass motion and body rotation while neglecting the swing-leg dynamics. 
Let $\bbsym{p}_c \in \RRR^3$ and $\bbsym{v}_c \in \RRR^3$ denote the center-of-mass position and velocity, respectively; let $\bbsym{\Phi} \in \RRR^3$ denote the orientation, parameterized by roll, pitch, and yaw angles; and let $\prescript{}{\CCC}{\bbsym{\omega}} \in \RRR^3$ denote the angular velocity expressed in the body frame.
Moreover, let $\bbsym{\Gamma}_i \in \RRR^3$ denote the ground reaction forces (GRFs) at each foot $i\in\{1,\dots,4\}$, and $\bbsym{\delta}$ be the contact states of all four legs, which can be computed by a gait optimization layer.
The full details of the SRBD model can be found in \cite{turrisi2024benefits}.
The vectors of states and control inputs are defined as
\begin{equation}
\bbsym{x}^\top = \left[ \bbsym{p}_c^\top,\; \bbsym{v}_c^\top,\; \bbsym{\Phi}^\top,\; \prescript{}{\CCC}{\bbsym{\omega}} \right],
\quad \bbsym{u}^\top = [\bbsym{\Gamma}_i^\top]_{i = 1, \dots, 4}.
\end{equation}
The following quadratic cost function is used for this control task:
\begin{equation}
\begin{multlined}
c (\bbsym{x}_{t+h+1}, \bbsym{u}_{t+h}; \bbsym{r}_{t+h+1}) \!=\! 
\norm{\bbsym{x}_{t+h+1} - \bbsym{x}_{\mathrm{ref}}}_{\bbsym{Q}}^2  + \\ 
\norm{\bbsym{u}_{t+h} - \bbsym{u}_{\mathrm{ref}}}_{\bbsym{R}}^2,    
\end{multlined}
\end{equation}
where $\bbsym{Q}$ and $\bbsym{R}$ are positive diagonal weighting matrices, $\bbsym{x}_{\mathrm{ref}}$ and $\bbsym{u}_{\mathrm{ref}}$ are the desired states and nominal contact forces, respectively, and $\bbsym{r}_{t+h+1}^\top = [\bbsym{x}_{\mathrm{ref}}^\top, \bbsym{u}_{\mathrm{ref}}^\top]$.

\begin{table}[t]
\caption{Comparison of MPPI, DPC, and Step-MPPI on the quadrupedal robot task over $100$ testing simulations with different goal positions.}
\label{tab:quad_summary}
\centering
\begin{tabular*}{\linewidth}{@{\extracolsep{\fill}}lccc@{}}
\toprule
Method & Success & Failure & Timeout \\
\midrule
MPPI & $83$ & $17$ & $0$ \\
DPC & $95$ & $4$ & $1$ \\
Step-MPPI & $\mathbf{100}$ & $\mathbf{0}$ & $\mathbf{0}$ \\
\bottomrule
\end{tabular*}
\vspace{-6mm}
\end{table}

\begin{figure}[t]
\centering
\includegraphics[width=0.8\linewidth]{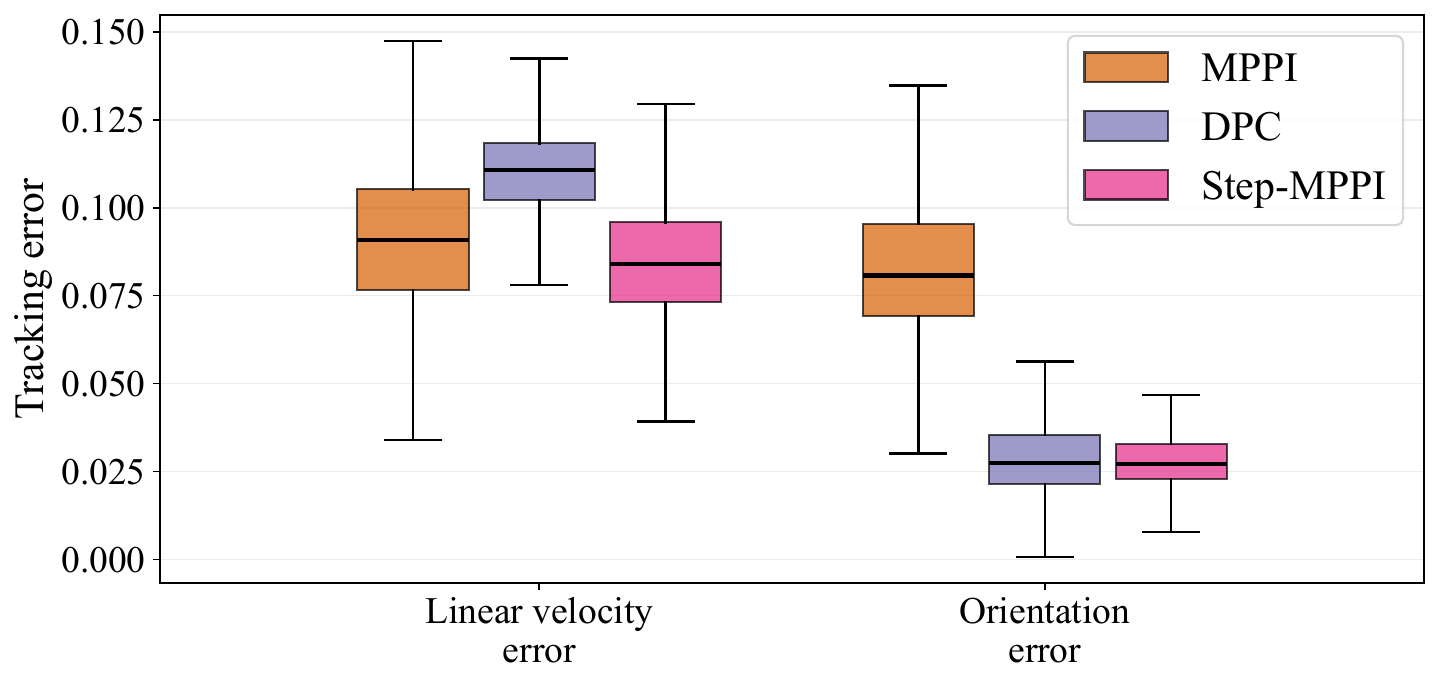}
\caption{Box-plot comparison of tracking performance for MPPI, DPC, and Step-MPPI in the quadrupedal robot task.}
\label{fig:cte_boxplot}
\vspace{-6mm}
\end{figure}

We consider the MPC problem with a horizon of length $H = 20$ and sampling time $\delta_t = \SI{0.02}{s}$.
We compare Step-MPPI with $512$ samples, DPC, and MPPI with $16,384$ rollout samples. 
The performance of these methods is evaluated over $100$ testing simulations with different goal positions.
The results in Table~\ref{tab:quad_summary} show that conventional MPPI, even with a large number of samples, achieves the lowest success rate of $84\%$. 
Meanwhile, Step-MPPI attains the best performance among all tested methods, successfully completing all simulations without any failures or timeouts. 
DPC performs reasonably well, with success rates of $95\%$, indicating that they are less reliable than Step-MPPI in this example.
We additionally present the tracking errors as box plots, including the linear velocity and orientation errors. 
Step-MPPI achieves the lowest linear-velocity error, with MPPI close behind, while DPC exhibits the largest velocity-error distribution. 
For orientation tracking, both DPC and Step-MPPI show much tighter distributions than MPPI, with Step-MPPI achieving the smallest spread and a slightly lower median error. 

\subsection{Urban Traffic Network}

In the urban traffic network setting \cite{tumu_differentiable_2024}, the task is to regulate the flow of traffic through a traffic network in order to dissipate the traffic as quickly as possible. 
The traffic network consists of $R$ regions with indices $\mathcal{R} = \{1,\ldots,R\}$. 
We denote the neighbors of region $i$ as $\mathcal{N}_i \subset \mathcal{R}$.
The state $x_{ij} \geq 0$ is the accumulation in region $i$ with destination $j$, and $\boldsymbol{x}$ is the matrix containing all state components. 
The total accumulation and outflow of region $i$ are computed by:
\begin{equation}
    x_i = \sum_{j\in \mathcal{R}} x_{ij}, \quad
    g_i(x_i) = a_i x_i^3 + b_i x_i^2 + c_i x_i, \label{eq:MFD}
\end{equation}
where $g_i$ is the macroscopic fundamental diagram (MFD), typically a cubic polynomial fit to empirical data \cite{johari_macroscopic_2021}. 
The transfer and completion flows are
\begin{equation}\label{eq:route-flows}
    m_{ihj} = \theta_{ihj} \frac{x_{ij}}{x_i}\,g_i(x_i), \quad
    m_{ii} = \frac{x_{ii}}{x_i}\,g_i(x_i),
\end{equation}
where $\theta_{ihj} \in [0,1]$ is the fraction of traffic in region $i$ with destination $j$ that routes through neighbor $h$, with $\sum_h \theta_{ihj} = 1$. 
The exogenous demand $d_{ij}(t)$ gives the rate at which vehicles appear in region $i$ with destination $j$. 
The perimeter control $u_{ij} \in [\underline{u},\bar{u}]$ regulates flow between adjacent regions. 
The system dynamics are therefore given by:
\begin{subequations}\label{eq:nmfd-update}
\begin{align}
    \dot{x}_{ii} &= d_{ii} - m_{ii}(\boldsymbol{x}) + \sum_{h \in \mathcal{N}_i} u_{hi}\, m_{hii}(\boldsymbol{x}), \\
    \dot{x}_{ij} &= d_{ij} - \sum_{h \in \mathcal{N}_i} u_{ih}\, m_{ihj}(\boldsymbol{x}) + \sum_{\substack{h \in \mathcal{N}_i \\ h \neq j}} u_{hi}\, m_{hij}(\boldsymbol{x}).
\end{align}
\end{subequations}

\begin{table}[!bt]
    \caption{Urban Traffic Network Results.\\{\small{TVH = Total Vehicle Hours, defined as $\sum_k \|\boldsymbol{x}_k\|_1 \cdot \Delta t / 3600$.}}
    \label{tab:results}}
    \centering
    \scriptsize{
    \begin{tabular*}{\linewidth}{@{\extracolsep{\fill}}lccc@{}}
        \toprule
        Method & Final Accumulation (veh) & TVH (veh$\cdot$h) \\
        \midrule
        \multicolumn{3}{l}{\textbf{In-Distribution}} \\
        \midrule
        Baseline   & $10,328.34 \pm 205.82$ & $17,858.97 \pm 288.67$ \\
        Naive MPPI & $10,532.53 \pm 238.07$ & $18,907.35 \pm 321.52$ \\
        DPC        & $\mathbf{132.61 \pm 15.44}$    & $11,182.57 \pm 394.38$\\
        Step-MPPI  & $156.7 \pm 9.98$      & $\mathbf{10,490.32 \pm 229.63}$ \\
        \midrule
        \multicolumn{3}{l}{\textbf{Out-of-Distribution}} \\
        \midrule
        Baseline   & $13,697.57 \pm 326.59$  & $24,145.02 \pm 531.99$ \\
        Naive MPPI & $13,905.52 \pm 318.42$  & $25,844.34 \pm 591.13$ \\
        DPC        & $9,781.82 \pm 1,325.84$  & $21,949.26 \pm 754.86$ \\
        Step-MPPI  & $\mathbf{686.73 \pm 130.32}$    & $\mathbf{17,513.5 \pm 791.2}$\\
        \bottomrule
    \end{tabular*}}
    \vspace{-3mm}
\end{table}

\begin{figure}[t]
    \centering
    \includegraphics[width=0.98\linewidth]{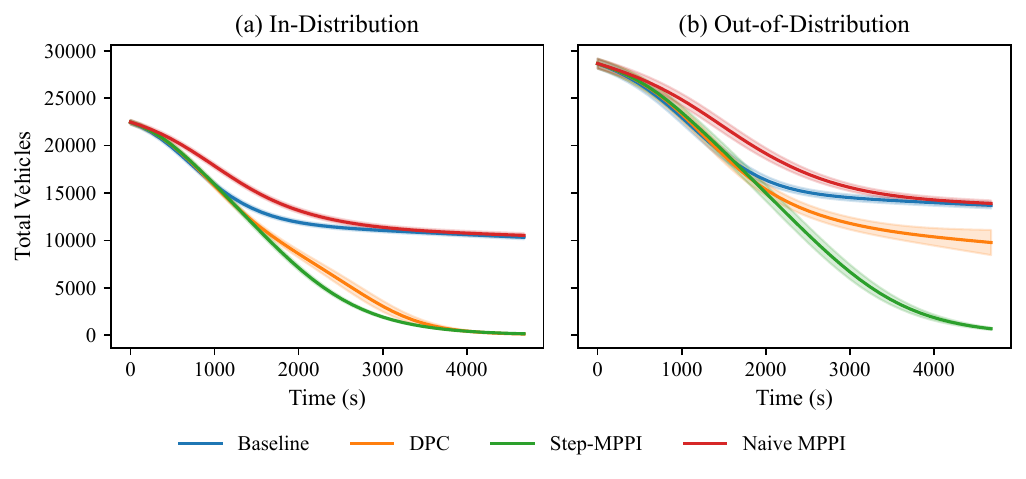}
    \caption{Total network accumulation over time for in-distribution (a) and out-of-distribution (b). %
    }
    \label{fig:nmfd_combined}
    \vspace{-6mm}
\end{figure}

We compare Step-MPPI against a baseline controller, which sets all control inputs to their upper limit, DPC, and MPPI. 
During training, the ``high traffic" cells with indices $(1,5)$, $(0,6)$, $(5,0)$, $(6,1)$ are initialized according to $\mathcal{N}(4000, 100)$ clipped to $[0, 15000]$. 
All other cells are initialized with $\mathcal{N}(150,20)$. 
$50,000$ initial states were sampled for training purposes. The planning horizon used for all methods is $H=40$.
Since Step-MPPI combines sampling-based and learning-based control, it is well-suited to scenarios with distribution shift. 
To evaluate this claim, we compare the methods in an out-of-distribution (OOD) setting, where we replace the initial distribution of the high traffic cells with samples drawn from $\mathcal{N}(5000,200)$.

The results are summarized in Table~\ref{tab:results} and Figure~\ref{fig:nmfd_combined}. 
In the in-distribution (ID) setting, the baseline and naive MPPI both fail to dissipate traffic effectively, leaving over $10{,}000$ vehicles in the network at the end of the horizon. 
Naive MPPI performs slightly worse than the baseline in both metrics, indicating that without a learned prior, its sampling-based optimization struggles in this high-dimensional setting. 
In contrast, DPC and Step-MPPI reduce the final accumulation by nearly two orders of magnitude, to $132.61$ and $156.7$ vehicles, respectively. 
Although DPC achieves a slightly lower final accumulation, Step-MPPI attains a lower TVH (total vehicle hours) of $10{,}490.32$ versus $11{,}182.57$ for DPC, indicating more efficient traffic dissipation over the full horizon. 
Moreover, Step-MPPI exhibits the lowest variance in both metrics, suggesting more consistent performance across initial conditions.

The advantage of Step-MPPI becomes more pronounced in the OOD setting. 
Under distribution shift, DPC degrades dramatically, with final accumulation increasing from $132.61$ to $9{,}781.82$ vehicles and a large standard deviation of $1{,}325.84$, indicating poor generalization beyond the training distribution. 
In contrast, Step-MPPI degrades much more gracefully: its final accumulation increases to $686.73$ vehicles, a $4.4\times$ increase from the ID setting but still over $14\times$ lower than DPC. 
Moreover, its TVH in the OOD setting is $17{,}513.5$, compared to $21{,}949.26$ for DPC, corresponding to a $20\%$ improvement. 
This robustness stems from the hybrid architecture of Step-MPPI, where the learned neural proposal provides an informed initialization and the MPPI refinement step adapts online to unseen states.

Finally, regarding computational cost, Table~\ref{tab:runtime_comparison} reports the average per-step runtimes for all three examples presented in this section. 
Overall, DPC is the fastest learned controller, as it requires only a single forward pass through the policy network. Step-MPPI and Neural Step-MPPI incur additional overhead over DPC due to the single-step MPPI sampling performed at each time step, yet they remain faster than MPPI. 
The higher cost of naive MPPI comes from rolling out all sample sequences over the full planning horizon, resulting in more dynamics evaluations per step than Step-MPPI.
Though Step-MPPI is slower than DPC, it still satisfies real-time requirements (\eg \SI{4.3}{ms} for a robot with a \SI{20}{ms} sampling time).
All controllers are evaluated on an AMD Ryzen 9 7900X with an RTX 4090 GPU. 
The controllers and models are compiled and executed on the GPU using the JAX framework, which leads to the low runtimes.

\begin{table}[t]
\centering
\caption{Runtime comparison (in ms) across different considered examples (Example 1: autonomous vehicles; Example 2: quadrupedal robot; Example 3: traffic network). }
\label{tab:runtime_comparison}
\begin{tabular*}{\linewidth}{@{\extracolsep{\fill}}lccc@{}}
\toprule
\textbf{Method} & \textbf{Example 1} & \textbf{Example 2} & \textbf{Example 3} \\
\midrule
MPPI                & 16.5 & 15.1 & 0.317 \\
DPC                 & 2.9  & 3.8  & 0.075 \\
Step-MPPI           & 7.5  & 4.3  & 0.153 \\
\bottomrule
\end{tabular*}
\vspace{-4mm}
\end{table}

\section{Conclusions}
\label{sec:con}

In this paper, we propose Step-MPPI, a learning-based control framework for efficient single-step lookahead MPPI. 
By learning the sampling distribution offline through DPC, the proposed approach reduces online execution to a neural network prediction followed by a single-step MPPI update, while preserving long-horizon planning capability during training. 
To enable end-to-end training, we formulate the MPPI weighted update as a differentiable layer and then derive its Jacobian using the reparameterization.
The numerical studies on an autonomous vehicle, a quadrupedal robot, and an urban traffic network demonstrate the effectiveness of the proposed frameworks over MPPI in improving computational efficiency and control performance.
Moreover, Step-MPPI achieves improved robustness compared with both MPPI and DPC.
Our future work will focus on extending the framework to non-Gaussian sampling distributions.

\section{GenAI Disclosure}

ChatGPT (OpenAI) \cite{chatgpt2026} was used to improve the syntax and grammar of several paragraphs in the manuscript. 

\bibliographystyle{IEEEtran}
\balance
\bibliography{IEEEabrv,references}

\end{document}